\newtheorem{theorem}{Theorem}[section]
\newtheorem{proposition}[theorem]{Proposition}
\newenvironment{proof}[1][Proof]{\begin{trivlist}
\item[\hskip \labelsep {\bfseries #1}]}{\end{trivlist}}
\newcommand{\qed}{\nobreak \ifvmode \relax \else
      \ifdim\lastskip<1.5em \hskip-\lastskip
      \hskip1.5em plus0em minus0.5em \fi \nobreak
      \vrule height0.75em width0.5em depth0.25em\fi}
\begin{document}

\preprint{}

\title{Separability Probability Formulas and Their Proofs for Generalized Two-Qubit 
X-Matrices Endowed with Hilbert-Schmidt and Induced Measures} 
\author{Charles F. Dunkl}
 \email{cfd5z@virginia.edu}
\affiliation{Department of Mathematics, University of Virginia,
Charlottesville, VA 22904-4137}
\author{Paul B. Slater}
 \email{slater@kitp.ucsb.edu}
\affiliation{%
University of California, Santa Barbara, CA 93106-4030\\
}%
\date{\today}

\begin{abstract}
Two-qubit X-matrices have been the subject of considerable recent attention, as they lend themselves more readily to analytical investigations than two-qubit density matrices of arbitrary nature. 
Here, we maximally exploit this relative ease of analysis to formally derive an exhaustive collection of results pertaining to the separability probabilities of generalized two-qubit X-matrices endowed with Hilbert-Schmidt and, more broadly, induced measures. Further, the analytical results obtained  exhibit interesting parallels to corresponding earlier (but, contrastingly, not yet fully rigorous) results for general 2-qubit states--deduced on the basis of determinantal moment formulas. Geometric interpretations can be given to arbitrary positive values of the
random-matrix Dyson-index-like parameter $\alpha$ employed.
\end{abstract}

\pacs{Valid PACS 03.67.Mn, 02.30.Zz, 02.50.Cw, 02.40.Ft}
\keywords{X-states, $2 \times 2$ quantum systems, separability probabilities,  entanglement  probability distribution moments, probability distribution reconstruction, Peres-Horodecki conditions,  partial transpose, determinant of partial transpose, two qubits, two rebits, two quaterbits, Hilbert-Schmidt measure, induced measure, continuous Dyson index, 
moments, determinantal moments, inverse problems, random matrix theory, generalized two-qubit systems}

\maketitle
\section{Introduction}
In previous work \cite{MomentBased,concise}, the authors investigated the separability probabilities 
of $4\times 4$ density matrices ($\rho$) with density approximation techniques \cite{Provost} based on the moments  of the determinant  ($|\rho^{PT}|$) of the partial transpose--the nonnegativity of which is necessary and sufficient for separability \cite{peres,horodecki}. The moment formulas employed involve a Dyson-index-like parameter $\alpha$ which
specializes to real, complex, quaternionic densities for 
$\alpha=\frac{1}{2},1$ and 2, respectively. As yet, the validity of the moment formulas is only a
conjecture, but there is strong evidence that they are correct. By use of extraordinarily large ($5\times10^{11}$ trials) Monte Carlo simulations, Fei and
Joynt \cite{FJ} obtained separability probability values agreeing to 4 decimal places with
our results (cf. \cite{MiSt,dubna}), that is $\frac{29}{64}, \frac{8}{33}$ and $\frac{26}{323}$, respectively. More generally still, we have also been 
investigating the general separability probability question \cite{ZHSL} when
$k$-th powers of the determinant ($|\rho|$) times the Hilbert-Schmidt (HS) measure \cite{ingemarkarol,szHS}--that is, induced measure \cite{ingemarkarol,Induced,aubrun}--are used as the probability measure on the convex set of density matrices. Due to the extreme intractability of performing direct probability calculations by use of high-dimensional
integration, these studies continue to be based on moment formulas and
sophisticated numerical analysis (density approximation) 
techniques \cite{Provost} to determine probabilities.

Recently, interesting results have been obtained for the subset of so-called
X-matrices, which are a form of toy model for density matrices \cite{MiSt,Xstates1,Xstates2}. The present
study exploits their simpler structure to directly, and now fully rigorously, compute the separability
probabilities for the family of induced measures (including the particular [$k=0$] Hilbert-Schmidt case). Although the results obtained are numerically
quite different from those \cite{formulas} holding for the full, unrestricted matrices, they do exhibit a strong
qualitative similarity. This can be taken as evidence that our earlier work \cite{MomentBased,concise,formulas} has been well directed. 
We note that in the X-matrix case, it has been possible to give geometric
interpretations to arbitrary positive values of the Dyson-index-like parameter $\alpha$, not just integer and half-integer values, such as $\frac{1}{2},1$ and 2 as described above (cf. \cite{edelman}), and perhaps this idea can be extended to  unrestricted sets of density matrices.

We start with the basic definitions. A $4\times4$ X-density matrix $\xi$ is
positive semi-definite with $Tr\left(  \xi\right)  =1$ and has the form%
\[
\xi=\left[
\begin{array}
[c]{cccc}%
\xi_{11} & 0 & 0 & \xi_{14}\\
0 & \xi_{22} & \xi_{23} & 0\\
0 & \overline{\xi_{23}} & \xi_{33} & 0\\
\overline{\xi_{14}} & 0 & 0 & \xi_{44}%
\end{array}
\right]  ;
\]
the defining conditions are equivalent to $\xi_{jj}\geq0$ for $1\leq j\leq4$,
$\xi_{14},\xi_{23}\in\mathbb{C}$, $\sum_{j=1}^{4}\xi_{jj}=1$, $\left\vert
\xi_{14}\right\vert ^{2}\leq\xi_{11}\xi_{44}$ and $\left\vert \xi
_{23}\right\vert ^{2}\leq\xi_{22}\xi_{33}$. The partial transpose of $\xi$ is
denoted by $\xi^{PT}$ and
\[
\xi^{PT}=\left[
\begin{array}
[c]{cccc}%
\xi_{11} & 0 & 0 & \xi_{23}\\
0 & \xi_{22} & \xi_{14} & 0\\
0 & \overline{\xi_{14}} & \xi_{33} & 0\\
\overline{\xi_{23}} & 0 & 0 & \xi_{44}%
\end{array}
\right]
\]
Thus $\det\xi^{PT}=\left(  \xi_{11}\xi_{44}-\left\vert \xi_{23}\right\vert
^{2}\right)  \left(  \xi_{22}\xi_{33}-\left\vert \xi_{14}\right\vert
^{2}\right)  $ and the density matrix $\xi$  is separable if and only if $\det\xi^{PT}\geq0$.

We induce a measure on the set $\mathcal{X}$ of X-density matrices from the
measure%
\[
\prod_{j=1}^{4}\mathrm{d}\xi_{jj}\left(  r_{5}r_{6}\right)  ^{2\alpha
-1}\mathrm{d}r_{5}\mathrm{d}r_{6}\mathrm{d}\theta_{5}\mathrm{d}\theta_{6},
\]
on the cone of all positive semi-definite X-matrices, where $\xi_{14}%
=r_{5}e^{\mathrm{i}\theta_{5}},\xi_{23}=r_{6}e^{\mathrm{i}\theta_{6}}$
($r_{j}\geq0,~-\pi<\theta_{j}\leq\pi,~j=5,6$). The details of normalizing the
measure to be a probability measure appear later. We will show for any
$\alpha>0$ that%
\[
\Pr\left\{  \det\xi^{PT}\geq0\right\}  =p\left(  \alpha\right)  :=\frac
{1}{\sqrt{2\pi}}\frac{\Gamma\left(  \alpha+\frac{1}{2}\right)  ^{2}%
\Gamma\left(  \alpha+\frac{3}{2}\right)  }{\Gamma\left(  \alpha+\frac{3}%
{4}\right)  \Gamma\left(  \alpha+1\right)  \Gamma\left(  \alpha+\frac{5}%
{4}\right)  },
\]
in particular, $p\left(  \frac{1}{2}\right)  =\dfrac{16}{3\pi^{2}},~p\left(
1\right)  =\dfrac{2}{5},~p\left(  2\right)  =\dfrac{2}{7}$. The value
$p\left(  1\right)  =\frac{2}{5}$ is also found  in
Milz and Strunz \cite[eq. (22)]{MiSt}, where they employed a quite different, interesting 
analytical framework, in which the principal variable of interest 
is the radial location in the Bloch ball of the reduced density matrix.

Further, we find $\Pr\left\{  \det\xi^{PT}\geq0\right\}  $ when the measure is
multiplied by $\left(  \det\xi\right)  ^{k}$ for some fixed $k\geq0$ and
$\alpha=1,2,3,\ldots$. For example when $\alpha=1$, $\Pr\left\{  \det\xi
^{PT}\geq0\right\}  =1-\frac{2\Gamma\left(  2k+4\right)  ^{2}}{\Gamma\left(
k+2\right)  \Gamma\left(  3k+6\right)  }$. For $\alpha=2,3,\ldots$we find that
$\Pr\left\{  \det\xi^{PT}\leq0\right\}  $ is a product of a ratio of gamma
functions with a polynomial in $k$ with positive coefficients of degree
$2\alpha-3$. This combination is similar to our results for the full
$4\times4$ density matrices \cite{formulas}.

Section 2 contains the construction of the coordinate system used to set up
the relevant definite integrals as well as the expression of the normalized
measure in this system, and explains the relation of the values $\alpha
=\frac{1}{2},1,2$ to real, complex, and quaternionic matrices.

In Section 3 the computation of $\Pr\left\{  \det\xi^{PT}\geq0\right\}  $ for
any $\alpha>0$ is carried out. The computation starts with a five-fold
iterated integral which is reduced to an integral of hypergeometric type and
finally to a classical hypergeometric summation formula.

The measure of $\left(  \det\xi\right)  ^{k}$ type is studied in Section 4.
The calculation of $\Pr\left\{  \det\xi^{PT}\leq0\right\}  $ is carried out
only for integer values of $\alpha$ for reasons of technical difficulty. As
will be seen, even this integer case is quite complicated. Necessary integral
formulas are derived in Section \ref{intform}.

\section{The coordinate system and the measures}

We construct a family of probability measures on $\mathcal{X}$, with a
parameter $\alpha$, which agree with the normalized measures induced by the
Hilbert-Schmidt measures on $M_{4}\left(  \mathbb{R}\right)  ,M_{4}\left(
\mathbb{C}\right)  ,M_{4}\left(  \mathbb{H}\right)  $ for $\alpha=\frac{1}%
{2},1,2$ respectively ($\mathbb{H}$ denotes the quaternions). As in our
previous studies \cite{MomentBased,formulas,wholehalf}, we use the Cholesky decomposition to define the measures. For
$x\in\mathbb{R}_{\geq0}^{4}\times\mathbb{C}^{2}$ let%
\begin{align*}
C  &  =\left[
\begin{array}
[c]{cccc}%
x_{1} & 0 & 0 & x_{5}\\
0 & x_{2} & x_{6} & 0\\
0 & 0 & x_{3} & 0\\
0 & 0 & 0 & x_{4}%
\end{array}
\right]  ,\\
\xi &  =C^{\ast}C=\left[
\begin{array}
[c]{cccc}%
x_{1}^{2} & 0 & 0 & x_{1}x_{5}\\
0 & x_{2}^{2} & x_{2}x_{6} & 0\\
0 & x_{2}\overline{x_{6}} & x_{3}^{2}+\left\vert x_{6}\right\vert ^{2} & 0\\
x_{1}\overline{x_{5}} & 0 & 0 & x_{4}^{2}+\left\vert x_{5}\right\vert ^{2}%
\end{array}
\right]  ,
\end{align*}
eventually we will impose the restriction $\sum_{j=1}^{4}x_{j}^{2}+\left\vert
x_{5}\right\vert ^{2}+\left\vert x_{6}\right\vert ^{2}=1$ so that $trC=1$, but
first we work with arbitrary positive-definite matrices.

To motivate the definition of the measures, we describe a simple model for the
$2\times2$ block. Consider the map defined on the subset $H:=\mathbb{R}%
_{+}^{2}\times\mathbb{R}^{m}$ of $\mathbb{R}^{2+m}$ (for some fixed
$m=1,2,\ldots$)%
\[
\phi:\left(  x_{1},x_{2},y_{1},\cdots,y_{m}\right)  \mapsto\left(  x_{1}%
^{2},x_{2}^{2}+\left\vert y\right\vert ^{2},x_{1}y_{1},\ldots,x_{1}%
y_{m}\right)  ;
\]
The Euclidean measure on $H$ can be expressed as $t_{1}^{-1/2}t_{2}%
^{-1/2}t_{3}^{m/2-1}\mathrm{d}t_{1}\mathrm{d}t_{2}\mathrm{d}t_{3}%
\mathrm{d}\omega\left(  y^{\prime}\right)  $ where $t_{1}=x_{1}^{2}%
,t_{2}=x_{2}^{2}$ and $t_{3}=\left\vert y\right\vert ^{2},y=\left\vert
y\right\vert y^{\prime}$ and $\mathrm{d}\omega\left(  y^{\prime}\right)  $ is
the surface measure on the unit sphere $\left\{  y^{\prime}\in\mathbb{R}%
^{m}:\left\vert y^{\prime}\right\vert ^{2}=1\right\}  $. The point $\left(
t_{1},t_{2},t_{3}\right)  \in\mathbb{R}_{+}^{3}$. The Jacobian of $\phi$
equals $4x_{1}^{1+m}x_{2}$. The image of the measure on $H$ under $\phi$ is
(constants are discarded)%
\[
t_{1}^{m/2}t_{3}^{m/2-1}\mathrm{d}t_{1}\mathrm{d}t_{2}\mathrm{d}%
t_{3}\mathrm{d}\omega\left(  y^{\prime}\right)  .
\]
Now adjoin another copy of $H$ and the map (a direct sum) and relabel to
arrive at%
\[
\phi:\left(  x_{1},x_{2},x_{3},x_{4},y^{\left(  5\right)  },y^{\left(
6\right)  }\right)  \in\mathbb{R}_{+}^{4}\times\mathbb{R}^{2m}\mapsto\left(
x_{1}^{2},x_{2}^{2},x_{3}^{2}+\left\vert y^{\left(  6\right)  }\right\vert
^{2},x_{4}^{2}+\left\vert y^{\left(  5\right)  }\right\vert ^{2}%
,x_{1}y^{\left(  5\right)  },x_{2}y^{\left(  6\right)  }\right)  ,
\]
and the measure
\[
t_{1}^{m/2}t_{2}^{m/2}t_{5}^{m/2-1}t_{6}^{m/2-1}\mathrm{d}t_{1}\mathrm{d}%
t_{2}\mathrm{d}t_{3}\mathrm{d}t_{4}\mathrm{d}t_{5}\mathrm{d}t_{6}%
d\omega\left(  y^{\left(  5\right)  \prime}\right)  d\omega\left(  y^{\left(
6\right)  \prime}\right)  .
\]
In the cases $m=1,2,4$ this construction can be interpreted in terms of the
Cholesky decomposition of a $4\times4$ positive-definite X-matrix over
$\mathbb{R},\mathbb{C},\mathbb{H}$ respectively. In this situation $\det
\xi^{PT}=\left(  t_{1}t_{4}+t_{1}t_{5}-t_{2}t_{6}\right)  \left(  t_{2}%
t_{3}+t_{2}t_{6}-t_{1}t_{5}\right)  $ and the $d\omega$ factor does not enter
into the calculation, and so is replaced by $1$. The same result is obtained
if $\xi_{14},\xi_{23}$ are replaced by $\left(  t_{1}t_{5}\right)
^{1/2}\mathrm{e}^{\mathrm{i}\theta_{5}},\left(  t_{2}t_{6}\right)
^{1/2}\mathrm{e}^{\mathrm{i}\theta_{6}}$ respectively (with $-\pi<\theta
_{5},\theta_{6}\leq\pi$) and $d\omega\left(  y^{\left(  5\right)  \prime
}\right)  d\omega\left(  y^{\left(  6\right)  \prime}\right)  $ is replaced by
$\left(  \frac{1}{2\pi}\right)  ^{2}\mathrm{d}\theta_{5}\mathrm{d}\theta_{6}$.
To sum up this discussion, the generic $4\times4$ positive-definite complex
X-matrix is%
\[
\left[
\begin{array}
[c]{cccc}%
t_{1} & 0 & 0 & \left(  t_{1}t_{5}\right)  ^{1/2}\mathrm{e}^{\mathrm{i}%
\theta_{5}}\\
0 & t_{2} & \left(  t_{2}t_{6}\right)  ^{1/2}\mathrm{e}^{\mathrm{i}\theta_{6}}
& 0\\
0 & \left(  t_{2}t_{6}\right)  ^{1/2}\mathrm{e}^{-\mathrm{i}\theta_{6}} &
t_{3}+t_{6} & 0\\
\left(  t_{1}t_{5}\right)  ^{1/2}\mathrm{e}^{-\mathrm{i}\theta_{5}} & 0 & 0 &
t_{4}+t_{5}%
\end{array}
\right]
\]
and the parametrized measure is%
\[
t_{1}^{\alpha}t_{2}^{\alpha}t_{5}^{\alpha-1}t_{6}^{\alpha-1}\mathrm{d}%
t_{1}\mathrm{d}t_{2}\mathrm{d}t_{3}\mathrm{d}t_{4}\mathrm{d}t_{5}%
\mathrm{d}t_{6}\mathrm{d}\theta_{5}\mathrm{d}\theta_{6};
\]
which has geometric interpretations when $\alpha=\frac{m}{2},$ with $m=1,2,4$.
The last step is to induce this measure on $\left\{  \xi:Tr\xi=1\right\}  $,
that is, on the unit simplex $T_{5}$ in $\mathbb{R}^{5}$ ($t_{2}=1-t_{1}%
-\sum_{j=3}^{6}t_{j}$) and drop $\mathrm{d}t_{2}$ from the measure. Also since
we are concerned with only $\det\xi^{PT}$ and $\det\xi$ we also drop
$\mathrm{d}\theta_{5}\mathrm{d}\theta_{6}$.

We have arrived at the measure on $T_{5}$%
\[
\mathrm{d}\mu_{\alpha}=t_{1}^{\alpha}t_{2}^{\alpha}t_{5}^{\alpha-1}%
t_{6}^{\alpha-1}\mathrm{d}t_{1}\mathrm{d}t_{3}\mathrm{d}t_{4}\mathrm{d}%
t_{5}\mathrm{d}t_{6},
\]
with normalization constant (a Dirichlet integral)%
\begin{align*}
c_{\alpha} &  =\frac{\Gamma\left(  4\alpha+4\right)  }{\Gamma\left(
\alpha+1\right)  ^{2}\Gamma\left(  \alpha\right)  ^{2}},\\
c_{\alpha}\int_{T^{5}}d\mu_{\alpha} &  =1.
\end{align*}
With these coordinates
\begin{align*}
\det\xi &  =t_{1}t_{2}t_{3}t_{4},\\
\det\xi^{PT} &  =\left(  t_{1}t_{4}+t_{1}t_{5}-t_{2}t_{6}\right)  \left(
t_{2}t_{3}+t_{2}t_{6}-t_{1}t_{5}\right)  .
\end{align*}
We introduce the desired coordinate system in two steps. The first step is:%
\begin{align*}
\xi_{11} &  =t_{1}=\frac{1}{2}\left(  1-s_{1}+s_{2}\right)  ,\xi_{22}%
=t_{2}=\frac{1}{2}\left(  s_{1}+s_{3}\right)  ,\\
\xi_{33} &  =t_{3}+t_{6}=\frac{1}{2}\left(  s_{1}-s_{3}\right)  ,\xi
_{44}=t_{4}+t_{5}=\frac{1}{2}\left(  1-s_{1}-s_{2}\right)  ,\\
t_{5} &  =\frac{1}{2}s_{4}\left(  1-s_{1}-s_{2}\right)  ,t_{6}=\frac{1}%
{2}s_{5}\left(  s_{1}-s_{3}\right)  ,\\
t_{3} &  =\frac{1}{2}\left(  1-s_{5}\right)  \left(  s_{1}-s_{3}\right)
,t_{4}=\frac{1}{2}\left(  1-s_{4}\right)  \left(  1-s_{1}-s_{2}\right)  ,
\end{align*}
where $0\leq s_{1},s_{4},s_{5}\leq1$ and $\left\vert s_{2}\right\vert
\leq1-s_{1},\left\vert s_{3}\right\vert \leq s_{1}$. The Jacobian (omitting
$t_{2}$ from the list $\left(  t\right)  $ ) is%
\[
\frac{\partial\left(  t\right)  }{\partial\left(  s\right)  }=\frac{1}%
{16}\left(  s_{1}-s_{3}\right)  \left(  1-s_{1}-s_{2}\right)  ,
\]
and the measure and $\det\xi^{PT}$ transform to%
\begin{align*}
\mathrm{d}\mu_{\alpha} &  =2^{-4\alpha-2}\left(  \left(  1-s_{1}\right)
^{2}-s_{2}^{2}\right)  ^{\alpha}\left(  s_{1}^{2}-s_{3}^{2}\right)  ^{\alpha
}s_{4}^{\alpha-1}s_{5}^{\alpha-1}\mathrm{d}s_{1}\mathrm{d}s_{2}\mathrm{d}%
s_{3}\mathrm{d}s_{4}\mathrm{d}s_{5},\\
\det\xi^{PT} &  =\frac{1}{16}\left(  \left(  \left(  1-s_{1}\right)
^{2}-s_{2}^{2}\right)  -s_{5}\left(  s_{1}^{2}-s_{3}^{2}\right)  \right)
\left(  \left(  s_{1}^{2}-s_{3}^{2}\right)  -s_{4}\left(  \left(
1-s_{1}\right)  ^{2}-s_{2}^{2}\right)  \right)  .
\end{align*}
Since $\det\xi^{PT}$ is even in $s_{2}$ and $s_{3}$ we can restrict to $0\leq
s_{2}\leq1-s_{1},~0\leq s_{3}\leq s_{1}$ (a \textquotedblleft
quarter\textquotedblright\ of $\mathcal{X}$, denoted by $\mathcal{X}_{0}$) and
multiply the measure by $4$. The second step is to change variables%
\begin{align}
s_{2} &  =\sqrt{\left(  1-s_{1}\right)  ^{2}-4\delta_{1}},s_{3}=\sqrt
{s_{1}^{2}-4\delta_{2}},\label{s2delta}\\
\frac{\partial\left(  s_{2},s_{3}\right)  }{\partial\left(  \delta_{1}%
,\delta_{2}\right)  } &  =\frac{4}{\sqrt{\left(  1-s_{1}\right)  ^{2}%
-4\delta_{1}}\sqrt{s_{1}^{2}-4\delta_{2}}},\nonumber\\
\det\xi^{PT} &  =\left(  \delta_{1}-s_{5}\delta_{2}\right)  \left(  \delta
_{2}-s_{4}\delta_{1}\right)  ,\nonumber\\
\det\xi &  =\delta_{1}\delta_{2}\left(  1-s_{4}\right)  \left(  1-s_{5}%
\right)  ,\nonumber\\
\det\xi^{PT}-\det\xi &  =\left(  \delta_{1}-\delta_{2}\right)  \left(
s_{5}\delta_{2}-s_{4}\delta_{1}\right)  .\label{detdet}%
\end{align}
The measure transforms to%
\begin{align*}
\mathrm{d}\nu_{\alpha} &  =\frac{\delta_{1}^{\alpha}\delta_{2}^{\alpha}%
s_{4}^{\alpha-1}s_{5}^{\alpha-1}}{\sqrt{\frac{1}{4}\left(  1-s_{1}\right)
^{2}-\delta_{1}}\sqrt{\frac{1}{4}s_{1}^{2}-\delta_{2}}}\mathrm{d}%
s_{1}\mathrm{d}\delta_{1}\mathrm{d}\delta_{2}\mathrm{d}s_{4}\mathrm{d}s_{5},\\
0 &  \leq s_{1},s_{4},s_{5}\leq1,~0\leq\delta_{1}\leq\left(  \frac{1-s_{1}}%
{2}\right)  ^{2},~0\leq\delta_{2}\leq\left(  \frac{s_{1}}{2}\right)  ^{2};
\end{align*}
There is a beta-integral which we will use later, and also to verify the
normalization (recall $B\left(  a,b\right)  :=\frac{\Gamma\left(  a\right)
\Gamma\left(  b\right)  }{\Gamma\left(  a+b\right)  }=\int_{0}^{1}%
v^{a-1}\left(  1-v\right)  ^{b-1}\mathrm{d}v$)%
\[
\int_{0}^{c}\frac{u^{\alpha}}{\sqrt{c-u}}du=c^{\alpha+1/2}B\left(
\alpha+1,\frac{1}{2}\right)
\]%
\begin{align}
\int_{\mathcal{X}_{0}}\mathrm{d}\nu_{\alpha} &  =\frac{2^{-2-4\alpha}}%
{\alpha^{2}}B\left(  \alpha+1,\frac{1}{2}\right)  ^{2}\int_{0}^{1}\left(
1-s_{1}\right)  ^{2\alpha+1}s_{1}^{2\alpha+1}\mathrm{d}s_{1}\label{normz}\\
&  =\frac{2^{-2-4\alpha}}{\alpha^{2}}\frac{\Gamma\left(  \alpha+1\right)
^{2}\pi\Gamma\left(  2\alpha+2\right)  ^{2}}{\Gamma\left(  \alpha+\frac{3}%
{2}\right)  ^{2}\Gamma\left(  4\alpha+4\right)  }=\frac{1}{c_{\alpha}%
},\nonumber
\end{align}
by use of the $\Gamma$-duplication formula $\Gamma\left(  2u\right)  =\frac
{1}{\sqrt{\pi}}2^{2u-1}\Gamma\left(  u\right)  \Gamma\left(  u+\frac{1}%
{2}\right)  $. We will also use the Pochhammer symbol, $\left(  t\right)
_{0}=1,\left(  t\right)  _{n+1}=\left(  t\right)  _{n}\left(  t+n\right)  $
for $t\in\mathbb{C}$.

We finish this section by describing the extreme values of $\det\xi$, $\det
\xi^{PT}$ and $\det\xi^{PT}-\det\xi$. The maximum value of both $\det\xi$ and
$\det\xi^{PT}$ is $\frac{1}{256}$ for $\xi=\frac{1}{4}I$ (identity matrix).
The minimum value of both $\det\xi^{PT}$ and $\det\xi^{PT}-\det\xi$ is
$-\frac{1}{16}$ for $s_{1}=s_{5}=0,s_{4}=1$,$\delta_{1}=\frac{1}{4},\delta
_{2}=0$ (and other matrices). To maximize $\det\xi^{PT}-\det\xi=\left(
\delta_{1}-\delta_{2}\right)  \left(  s_{5}\delta_{2}-s_{4}\delta_{1}\right)
$ set $s_{5}=1,s_{4}=0,\delta_{1}=\left(  \frac{1-s_{1}}{2}\right)
^{2},\delta_{2}=\left(  \frac{_{s_{1}}}{2}\right)  ^{2}$ to obtain $\frac
{1}{16}s_{1}^{2}\left(  1-2s_{1}\right)  $ with maximum value $\frac{1}{432}$
at $s_{1}=\frac{1}{3}$. These extreme X-state values are identical to those for the
full matrices.

\section{The computation of $\Pr\left\{  \det\xi^{PT}\geq0\right\}  $}

The desired probability is the $\nu_{\alpha}$-measure of the set\linebreak%
\ $\left\{  \left(  s_{1},\delta_{1},\delta_{2},s_{4},s_{5}\right)  :\left(
\delta_{1}-s_{5}\delta_{2}\right)  \left(  \delta_{2}-s_{4}\delta_{1}\right)
\geq0\right\}  $, in other words, the definite integral of $\mathrm{d}%
\nu_{\alpha}$ over this set. We start the iterated integral with the variables
$s_{4},s_{5}$. There are apparently two possibilities for $\left(  \delta
_{1}-s_{5}\delta_{2}\right)  \left(  \delta_{2}-s_{4}\delta_{1}\right)  \geq0$:

\begin{itemize}
\item $\delta_{1}-s_{5}\delta_{2}\leq0$ and $\delta_{2}-s_{4}\delta_{1}\leq0$;
this implies $\frac{\delta_{1}}{\delta_{2}}\leq s_{5}$ and $\frac{\delta_{2}%
}{\delta_{1}}\leq s_{4}$, but $\max\left(  \frac{\delta_{1}}{\delta_{2}}%
,\frac{\delta_{2}}{\delta_{1}}\right)  >1$ and one of the inequalities
contradicts $s_{4},s_{5}\leq1$, except for the trivial case $\delta_{1}%
=\delta_{2},s_{3}=1=s_{4}$, included in the following case (the products of
the two pairs of eigenvalues of $\xi^{PT}$ are $\left(  \delta_{1}-s_{5}%
\delta_{2}\right)  $ and $\left(  \delta_{2}-s_{4}\delta_{1}\right)  $, so
this demonstrates that $\xi^{PT}$ can have at most one negative eigenvalue);

\item $\delta_{1}-s_{5}\delta_{2}\geq0$ and $\delta_{2}-s_{4}\delta_{1}\geq0$;
equivalent to $\frac{\delta_{1}}{\delta_{2}}\geq s_{5}$ and $\frac{\delta_{2}%
}{\delta_{1}}\geq s_{4}$. Thus $\delta_{2}\leq\delta_{1}$ imposes the bounds
$0\leq s_{4}\leq\frac{\delta_{2}}{\delta_{1}}\leq1$ and $0\leq s_{5}\leq
1\leq\frac{\delta_{1}}{\delta_{2}}$. Similarly $\delta_{1}\leq\delta_{2}$
imposes the bounds $0\leq s_{4}\leq1$ and $0\leq s_{5}\leq\frac{\delta_{1}%
}{\delta_{2}}\leq1$. In both cases the integral of $s_{4}^{\alpha-1}%
s_{5}^{\alpha-1}\mathrm{d}s_{4}\mathrm{d}s_{5}$ over this region equals%
\[
\frac{1}{\alpha^{2}}\left(  \min\left(  \frac{\delta_{2}}{\delta_{1}}%
,\frac{\delta_{1}}{\delta_{2}}\right)  \right)  ^{\alpha}.
\]

\end{itemize}

As a side observation, the computation of $\Pr\left\{  \det\xi^{PT}\geq\det
\xi\right\}  =\Pr\left\{  \left(  \delta_{1}-\delta_{2}\right)  \left(
s_{5}\delta_{2}-s_{4}\delta_{1}\right)  \geq0\right\}  $ starts with
integrating $s_{4}^{\alpha-1}s_{5}^{\alpha-1}\mathrm{d}s_{4}\mathrm{d}s_{5}$
over the region $\delta_{1}\geq\delta_{2},0\leq s_{4}\leq\frac{\delta_{2}%
}{\delta_{1}}s_{5}$, or the region $\delta_{2}\geq\delta_{1},0\leq s_{5}%
\leq\frac{\delta_{1}}{\delta_{2}}s_{4}$; the result in both cases is $\frac
{1}{2\alpha^{2}}\left(  \min\left(  \frac{\delta_{2}}{\delta_{1}},\frac
{\delta_{1}}{\delta_{2}}\right)  \right)  ^{\alpha}$. Since the rest of the
probability calculation is the same for both, we see that $\Pr\left\{  \det
\xi^{PT}\geq\det\xi\right\}  =\frac{1}{2}\Pr\left\{  \det\xi^{PT}>0\right\}
$, analogously to the general $4\times4$ situation, as we discussed in earlier work 
\cite{wholehalf}.

By symmetry, it suffices to integrate over $\left\{  0\leq s_{1}\leq\frac{1}%
{2}\right\}  $ and double the result. Denote $a:=\left(  \frac{1-s_{1}}%
{2}\right)  ^{2}$ and $b:=\left(  \frac{s_{1}}{2}\right)  ^{2}$, thus $0\leq
b\leq a\leq\frac{1}{4}$. In the iterated triple integral first integrate with
respect to $\delta_{1}$. Let%
\begin{align*}
I_{\alpha} &  :=\frac{2}{\alpha^{2}}\int_{0}^{1/2}\mathrm{d}s_{1}\int_{0}%
^{b}\frac{\delta_{2}^{\alpha}}{\sqrt{b-\delta_{2}}}\mathrm{d}\delta
_{2}\left\{  \int_{0}^{\delta_{2}}\left(  \frac{\delta_{1}}{\delta_{2}%
}\right)  ^{\alpha}\frac{\delta_{1}^{\alpha}}{\sqrt{a-\delta_{1}}}%
\mathrm{d}\delta_{1}+\int_{\delta_{2}}^{a}\left(  \frac{\delta_{2}}{\delta
_{1}}\right)  ^{\alpha}\frac{\delta_{1}^{\alpha}}{\sqrt{a-\delta_{1}}%
}\mathrm{d}\delta_{1}\right\}  \\
&  =\frac{2}{\alpha^{2}}\int_{0}^{1/2}\mathrm{d}s_{1}\int_{0}^{b}%
\frac{\mathrm{d}\delta_{2}\ }{\sqrt{b-\delta_{2}}}\left\{  \int_{0}%
^{\delta_{2}}\frac{\delta_{1}^{2\alpha}}{\sqrt{a-\delta_{1}}}\mathrm{d}%
\delta_{1}+\delta_{2}^{2\alpha}\int_{\delta_{2}}^{a}\frac{\mathrm{d}\delta
_{1}}{\sqrt{a-\delta_{1}}}\right\}  .
\end{align*}
In the $\left\{  \cdot\right\}  $ expression, the second integral equals $2\sqrt{a-\delta_{2}%
}$; for the first one, interchange the order of integration (over $0\leq
\delta_{1}\leq\delta_{2}\leq b$) to obtain%
\[
\int_{0}^{b}\frac{\delta_{1}^{2\alpha}}{\sqrt{a-\delta_{1}}}\mathrm{d}%
\delta_{1}\int_{\delta_{1}}^{b}\frac{\mathrm{d}\delta_{2}\ }{\sqrt
{b-\delta_{2}}}=2\int_{0}^{b}\delta_{1}^{2\alpha}\sqrt{\frac{b-\delta_{1}%
}{a-\delta_{1}}}\mathrm{d}\delta_{1}.
\]
Thus,%
\[
I_{\alpha}=\frac{4}{\alpha^{2}}\int_{0}^{1/2}\mathrm{d}s_{1}\int_{0}%
^{b}v^{2\alpha}\left\{  \sqrt{\frac{b-v}{a-v}}+\sqrt{\frac{a-v}{b-v}}\right\}
\mathrm{d}v.
\]
This integral can be directly evaluated for any $\alpha$ with $2\alpha
\in\mathbb{Z}_{+}$ with elementary methods. However, this integral is one of a
parametrized family which all have closed forms for their values. The
integrals are denoted by
\[
I\left(  m,n\right)  :=\int_{0}^{\frac{1}{2}}ds\int_{0}^{b}v^{m}\left\{
\left(  a-v\right)  ^{n}\sqrt{\frac{a-v}{b-v}}+\left(  b-v\right)  ^{n}%
\sqrt{\frac{b-v}{a-v}}\right\}  \mathrm{d}v.
\]
The formulas for $\left(  m,n\right)  =\left(  2\alpha,0\right)  $ for
$\alpha>0$ and for $m,n=0,1,2,3,\ldots$ are derived in Section \ref{intform}.

From Proposition \ref{sum2a} it follows that
\[
I_{\alpha}=\frac{4}{\alpha^{2}}I\left(  2\alpha,0\right)  =\frac{\pi
}{2^{2+8\alpha}}\frac{\Gamma(2\alpha)\Gamma(2\alpha)}{\Gamma\left(
2\alpha+\frac{3}{2}\right)  \Gamma\left(  2\alpha+\frac{5}{2}\right)  }.%
\]

When $2\alpha\in\mathbb{Z}_{+}$ the reciprocal of $I_{\alpha}$ is an integer.
In fact,%
\[
I_{\alpha}^{-1}=2\left(  4\alpha+3\right)  \binom{2\alpha+2}{2}^{2}%
\binom{4\alpha+1}{2\alpha+2}^{2}.
\]

Finally, we obtain
\begin{align}
\Pr\left\{  \det\xi^{PT}\geq0\right\}   &  =c_{\alpha}I_{\alpha}=\frac{\pi
}{2^{2+8\alpha}}\frac{\Gamma\left(  4\alpha+4\right)  \Gamma(2\alpha
)\Gamma(2\alpha)}{\Gamma\left(  \alpha+1\right)  ^{2}\Gamma\left(
\alpha\right)  ^{2}\Gamma\left(  2\alpha+\frac{3}{2}\right)  \Gamma\left(
2\alpha+\frac{5}{2}\right)  }\label{p(alpha)}\\
&  =\frac{1}{\sqrt{2\pi}}\frac{\Gamma\left(  \alpha+\frac{1}{2}\right)
^{2}\Gamma\left(  \alpha+\frac{3}{2}\right)  }{\Gamma\left(  \alpha+\frac
{3}{4}\right)  \Gamma\left(  \alpha+1\right)  \Gamma\left(  \alpha+\frac{5}%
{4}\right)  }=:p\left(  \alpha\right)  \nonumber
\end{align}
by repeated use of the $\Gamma$-duplication formula. If $\alpha=m$ is an
integer, then $c_{\alpha}=\dfrac{\left(  m-1\right)  ^{2}\left(  4m+3\right)
!}{m!^{4}}$ and $p\left(  \alpha\right)  $ is rational. If $\alpha+\frac{1}%
{2}=m$, then $c_{\alpha}=\dfrac{4\left(  m-\frac{1}{2}\right)  ^{2}\left(
4m+2\right)  !}{\pi^{2}\left(  \frac{1}{2}\right)  _{m}}$ and $p\left(
\alpha\right)  \pi^{2}$ is rational. The asymptotic formula $\dfrac
{\Gamma\left(  u+a\right)  }{\Gamma\left(  u+b\right)  }\sim u^{a-b}$ as
$u\rightarrow\infty$ shows that $p\left(  \alpha\right)  \sim\dfrac{1}%
{\sqrt{2\pi\alpha}}$ as $\alpha\rightarrow\infty$. For example, $p\left(
10\right)  =0.12683$\ldots\ and $\frac{1}{\sqrt{20\pi}}=0.12616\ldots$.

\subsection{Minimally degenerate matrices}

In our previous work \cite{wholehalf}, we considered the separability probability of a minimally
degenerate density matrix, and found numerical and formulaic evidence that
this probability is exactly one half of the unrestricted one. We can prove
this relation in the present X-matrix case. From $\det\xi=\delta_{1}\delta
_{2}\left(  1-s_{4}\right)  \left(  1-s_{5}\right)  $, we see that a necessary
condition for minimal degeneracy is that one of $\delta_{1},\delta
_{2},\allowbreak\left(  1-s_{4}\right)  ,\allowbreak\left(  1-s_{5}\right)  $
vanishes. The possibility $\delta_{1}=0$ or $\delta_{2}=0$ is not minimal, for
suppose $\delta_{1}=0$ then (see (\ref{s2delta})) $s_{2}=1-s_{1}$ which
implies $t_{4}=0=t_{5}$, reducing the dimension by two. This leaves the cases
$s_{4}=1$ or $s_{5}=1$ (for example, $s_{5}=1$ is equivalent to $\xi_{22}%
\xi_{33}-\left\vert \xi_{23}\right\vert ^{2}=0$) which remove only one
dimension (degree of freedom). So the minimally degenerate subset of
$\mathcal{X}$ consists of two almost (i.e. the intersection has measure zero)
disjoint subsets, one with $s_{4}=1$ and one with $s_{5}=1$. It suffices to
compute the probability for one of these, say $s_{5}=1$. First we compute the
normalizing constant: the calculation is the same as in (\ref{normz}) except
for the factor $\frac{1}{\alpha}$ from $\int_{0}^{1}s_{5}^{\alpha-1}%
\mathrm{d}s_{5}$, thus the constant is $\frac{c_{\alpha}}{\alpha}$. We see
$\det\xi^{PT}=\left(  \delta_{1}-\delta_{2}\right)  \left(  \delta_{2}%
-s_{4}\delta_{1}\right)  \geq0$ if $\delta_{1}\geq\delta_{2}$ and $0\leq
s_{4}\leq\frac{\delta_{2}}{\delta_{1}}$, or $\delta_{1}<\delta_{2}$ and
$s_{4}\geq\frac{\delta_{2}}{\delta_{1}}>1$; as before the second case is
impossible. Thus the first step of integration (for $s_{4}$) yields $\frac
{1}{\alpha}\left(  \frac{\delta_{2}}{\delta_{1}}\right)  ^{\alpha}$ for
$\delta_{1}\geq\delta_{2}$ and zero otherwise. In contrast to the earlier
computation, there is no $s_{1}\leftrightarrow1-s_{1}$ symmetry so $0\leq
s_{1}\leq\frac{1}{2}$ and $\frac{1}{2}\leq s_{1}\leq1$ are handled separately.

If $0\leq s_{1}\leq\frac{1}{2}$ (so that $b=\left(  \frac{s_{1}}{2}\right)
^{2}\leq\left(  \frac{1-s_{1}}{2}\right)  ^{2}=a$), then the remaining triple
integral is
\begin{align*}
&  \frac{1}{\alpha}\int_{0}^{1/2}\mathrm{d}s_{1}\int_{0}^{b}\frac{\delta
_{2}^{\alpha}}{\sqrt{b-\delta_{2}}}\mathrm{d}\delta_{2}\int_{\delta_{2}}%
^{a}\left(  \frac{\delta_{2}}{\delta_{1}}\right)  ^{\alpha}\frac{\delta
_{1}^{\alpha}}{\sqrt{a-\delta_{1}}}\mathrm{d}\delta_{1}\\
&  =\frac{2}{\alpha}\int_{0}^{1/2}\mathrm{d}s_{1}\int_{0}^{b}\delta
_{2}^{2\alpha}\sqrt{\frac{a-\delta_{2}}{b-\delta_{2}}}\mathrm{d}\delta_{2}.
\end{align*}
If $\frac{1}{2}\leq s_{1}\leq1$ (so that $b\geq a\geq\delta_{1}\geq\delta_{2}%
$), then the remaining triple integral is
\begin{align*}
&  \frac{1}{\alpha}\int_{1/2}^{1}\mathrm{d}s_{1}\int_{0}^{a}\frac{\delta
_{2}^{\alpha}}{\sqrt{b-\delta_{2}}}\mathrm{d}\delta_{2}\int_{\delta_{2}}%
^{a}\left(  \frac{\delta_{2}}{\delta_{1}}\right)  ^{\alpha}\frac{\delta
_{1}^{\alpha}}{\sqrt{a-\delta_{1}}}\mathrm{d}\delta_{1}\\
&  =\frac{2}{\alpha}\int_{1/2}^{1}\mathrm{d}s_{1}\int_{0}^{a}\delta
_{2}^{2\alpha}\sqrt{\frac{a-\delta_{2}}{b-\delta_{2}}}\mathrm{d}\delta_{2}.
\end{align*}
In the latter expression change variables $s=1-s_{1}$ which interchanges $a$
and $b$ with the result%
\[
\frac{2}{\alpha}\int_{0}^{1/2}\mathrm{d}s\int_{0}^{b}\delta_{2}^{2\alpha}%
\sqrt{\frac{b-\delta_{2}}{a-\delta_{2}}}\mathrm{d}\delta_{2}%
\]
and adding the two parts leads to%
\[
\frac{2}{\alpha}\int_{0}^{1/2}\mathrm{d}s\int_{0}^{b}\delta_{2}^{2\alpha
}\left\{  \sqrt{\frac{b-\delta_{2}}{a-\delta_{2}}}+\sqrt{\frac{a-\delta_{2}%
}{b-\delta_{2}}}\right\}  \mathrm{d}\delta_{2}=\frac{2}{\alpha}I\left(
2\alpha,0\right)  .
\]
Thus%
\[
\Pr\left\{  \det\xi^{PT}\geq0\right\}  =\frac{c_{\alpha}}{\alpha}\frac
{2}{\alpha}I\left(  2\alpha,0\right)  =\frac{1}{2}p\left(  \alpha\right)  ,
\]
see equation (\ref{p(alpha)}).

\section{The measure $\left(  \det\xi\right)  ^{k}\mathrm{d}\nu_{\alpha}$}

Here we compute $\Pr\left\{  \det\xi^{PT}<0\right\}  $ when $\mathcal{X}$ is
furnished with the normalization of the measure $\left(  \det\xi\right)
^{k}\mathrm{d}\nu_{\alpha}$, for $\alpha,k\in\mathbb{Z}_{+}$. It appears
possible to carry out the calculations for fixed integers $k$ and arbitrary
$\alpha>0$, but with the goal of allowing $k$ as a free parameter, technical
factors impel us to restrict to integer $\alpha$. Recall $\det\xi=\delta
_{1}\delta_{2}\left(  1-s_{4}\right)  \left(  1-s_{5}\right)  $. The
normalization constant is%
\[
c_{\alpha,k}=\frac{\Gamma\left(  4\alpha+4k+4\right)  }{\Gamma\left(
k+\alpha+1\right)  ^{2}\Gamma\left(  k+1\right)  ^{2}\Gamma\left(
\alpha\right)  ^{2}},
\]
and the measure is%
\[
\frac{\delta_{1}^{\alpha+k}\delta_{2}^{\alpha+k}s_{4}^{\alpha-1}s_{5}%
^{\alpha-1}\left(  1-s_{4}\right)  ^{k}\left(  1-s_{5}\right)  ^{k}}%
{\sqrt{\frac{1}{4}\left(  1-s_{1}\right)  ^{2}-\delta_{1}}\sqrt{\frac{1}%
{4}s_{1}^{2}-\delta_{2}}}\mathrm{d}s_{1}\mathrm{d}\delta_{1}\mathrm{d}%
\delta_{2}\mathrm{d}s_{4}\mathrm{d}s_{5}.
\]
Some experimenting suggests that it is more tractable to compute $\Pr\left\{
\det\xi^{PT}<0\right\}  $. Then, the first step is to compute (with $\delta
_{0}=\min\left(  \frac{\delta_{1}}{\delta_{2}},\frac{\delta_{2}}{\delta_{1}%
}\right)  $)%
\[
\int_{0}^{1}u^{\alpha-1}\left(  1-u\right)  ^{k}\mathrm{d}u\int_{\delta_{0}%
}^{1}v^{\alpha-1}\left(  1-v\right)  ^{k}\mathrm{d}v=\frac{\Gamma\left(
\alpha\right)  \Gamma\left(  k+1\right)  }{\Gamma\left(  \alpha+k+1\right)
}\int_{\delta_{0}}^{1}v^{\alpha-1}\left(  1-v\right)  ^{k}\mathrm{d}v,
\]
where $\left\{  u,v\right\}  =\left\{  s_{4},s_{5}\right\}  $ (depending on
whether $\delta_{1}\geq\delta_{2})$. The second integral is an incomplete Beta
integral. As discussed above, the restriction $\alpha\in\mathbb{Z}_{+}$ leads
to a feasible calculation. We use this simple antiderivative formula:%
\begin{gather}
\frac{d}{dv}\sum_{j=0}^{\alpha-1}\left(  -1\right)  ^{j+1}\frac{\left(
1-\alpha\right)  _{j}}{\left(  k+1\right)  _{j+1}}v^{\alpha-1-j}\left(
1-v\right)  ^{k+j+1}=v^{\alpha-1}\left(  1-v\right)  ^{k},\nonumber\\
\int_{\delta_{0}}^{1}v^{\alpha-1}\left(  1-v\right)  ^{k}\mathrm{d}%
v=\sum_{j=0}^{\alpha-1}\left(  -1\right)  ^{j}\frac{\left(  1-\alpha\right)
_{j}}{\left(  k+1\right)  _{j+1}}\delta_{0}^{\alpha-1-j}\left(  1-\delta
_{0}\right)  ^{k+j+1}. \label{antider}%
\end{gather}
From this point on, we work with each term in the sum separately, in the triple
integrals that remain to be evaluated. At the end, the parts will be summed
over $j$ to get the desired probability.

If $\delta_{1}\leq\delta_{2}$ then $\delta_{1}^{\alpha+k}\delta_{2}^{\alpha
+k}\delta_{0}^{\alpha-1-j}\left(  1-\delta_{0}\right)  ^{k+j+1}=\delta
_{1}^{2\alpha+k-1-j}\left(  \delta_{2}-\delta_{1}\right)  ^{k+j+1}$, otherwise
$\delta_{1}^{\alpha+k}\delta_{2}^{\alpha+k}\delta_{0}^{\alpha-1-j}\left(
1-\delta_{0}\right)  ^{k+j+1}=\delta_{2}^{2\alpha+k-1-j}\left(  \delta
_{1}-\delta_{2}\right)  ^{k+j+1}$. As in the previous section, we restrict to
$0\leq s_{1}\leq\frac{1}{2}$ and set $a=\frac{\left(  1-s_{1}\right)  ^{2}}%
{4},b=\frac{s_{1}^{2}}{4}$.

The typical term $\delta_{1}^{\alpha+k}\delta_{2}^{\alpha+k}\delta_{0}%
^{\alpha-1-j}\left(  1-\delta_{0}\right)  ^{k+j+1}$ produces the integral%
\begin{align*}
&  \int_{0}^{\frac{1}{2}}\mathrm{d}s_{1}\int_{0}^{b}\frac{\mathrm{d}\delta
_{2}}{\sqrt{b-\delta_{2}}}\\
&  \times\left\{  \int_{0}^{\delta_{2}}\frac{\delta_{1}^{2\alpha+k-1-j}\left(
\delta_{2}-\delta_{1}\right)  ^{k+j+1}}{\sqrt{a-\delta_{1}}}\mathrm{d}%
\delta_{1}+\delta_{2}^{2\alpha+k-1-j}\int_{\delta_{2}}^{a}\frac{\left(
\delta_{1}-\delta_{2}\right)  ^{k+j+1}}{\sqrt{a-\delta_{1}}}\mathrm{d}%
\delta_{1}\right\}  \\
&  =\int_{0}^{\frac{1}{2}}\mathrm{d}s_{1}\int_{0}^{b}\frac{\delta_{1}%
^{2\alpha+k-1-j}\mathrm{d}\delta_{1}}{\sqrt{a-\delta_{1}}}\int_{\delta_{1}%
}^{b}\frac{\left(  \delta_{2}-\delta_{1}\right)  ^{k+j+1}\mathrm{d}\delta_{2}%
}{\sqrt{b-\delta_{2}}}\\
&  +\int_{0}^{\frac{1}{2}}\mathrm{d}s_{1}\int_{0}^{b}\frac{\delta_{2}%
^{2\alpha+k-1-j}\mathrm{d}\delta_{2}}{\sqrt{b-\delta_{2}}}\int_{\delta_{2}%
}^{a}\frac{\left(  \delta_{1}-\delta_{2}\right)  ^{k+j+1}}{\sqrt{a-\delta_{1}%
}}\mathrm{d}\delta_{1}\\
&  =B\left(  k+j+2,\frac{1}{2}\right)  \int_{0}^{\frac{1}{2}}\mathrm{d}s_{1}\\
&  \times\left\{  \int_{0}^{b}\left(  b-\delta_{1}\right)  ^{k+j+3/2}%
\frac{\delta_{1}^{2\alpha+k-1-j}\mathrm{d}\delta_{1}}{\sqrt{a-\delta_{1}}%
}+\int_{0}^{b}\left(  a-\delta_{2}\right)  ^{k+j+3/2}\frac{\delta_{2}%
^{2\alpha+k-1-j}\mathrm{d}\delta_{2}}{\sqrt{b-\delta_{2}}}\right\}  \\
&  =\frac{\left(  k+j+1\right)  !}{\left(  \frac{1}{2}\right)  _{k+j+2}}%
\int_{0}^{\frac{1}{2}}\mathrm{d}s_{1}\\
&  \times\int_{0}^{b}v^{2\alpha+k-1-j}\left\{  \left(  b-v\right)
^{k+j+1}\sqrt{\frac{b-v}{a-v}}+\left(  a-v\right)  ^{k+j+1}\sqrt{\frac
{a-v}{b-v}}\right\}  \mathrm{d}v\\
&  =\frac{\left(  k+j+1\right)  !}{\left(  \frac{1}{2}\right)  _{k+j+2}%
}I\left(  2\alpha+k-1-j,k+j+1\right)  .
\end{align*}
(Note $B\left(  m+1,\frac{1}{2}\right)  =\frac{m!}{\left(  \frac{1}{2}\right)
_{m+1}}$.) The proof and statement of the $I\left(  m,n\right)  $ formula are
in Prop.\ref{sumMN}. Then%
\begin{align*}
\Pr\left\{  \det\xi^{PT}\leq0\right\}    & =2c_{\alpha,k}B\left(
\alpha,k+1\right)  \\
& \times\sum_{j=0}^{\alpha-1}\left(  -1\right)  ^{j}\frac{\left(
1-\alpha\right)  _{j}\left(  k+j+1\right)  !}{\left(  k+1\right)
_{j+1}\left(  \frac{1}{2}\right)  _{k+j+2}}I\left(  2\alpha
+k-1-j,k+j+1\right)  .
\end{align*}
By formula (\ref{sumMN})%
\[
I\left(  2\alpha+k-1-j,k+j+1\right)  =\frac{\left(  2\alpha+k-1-j\right)
!^{2}\left(  2\alpha+2k+1\right)  !\left(  \frac{5}{2}\right)  _{k+j}%
}{2^{4\alpha+4k+3}\left(  4\alpha+3k-j+1\right)  !\left(  \frac{5}{2}\right)
_{2\alpha+2k}}.
\]
We combine the various terms to arrive at:%
\begin{align*}
\Pr\left\{  \det\xi^{PT}\leq0\right\}   &  =\sum_{j=0}^{\alpha-1}\frac
{2\Gamma\left(  2\alpha+2k+2\right)  ^{2}\Gamma\left(  2\alpha+k-j\right)
^{2}}{\Gamma\left(  \alpha+k+1\right)  ^{3}\Gamma\left(  \alpha-j\right)
\Gamma\left(  4\alpha+3k-j+2\right)  }\\
&  =\frac{\Gamma\left(  2\alpha+2k+3\right)  ^{2}}{2\Gamma\left(
\alpha+k+2\right)  \Gamma\left(  4\alpha+3k+2\right)  }\\
&  \times\sum_{j=0}^{\alpha-1}\frac{\left(  -1\right)  ^{j}}{\left(
\alpha-1\right)  !\left(  \alpha+k+1\right)  }\left(  \alpha+k+1\right)
_{\alpha-1-j}^{2}\left(  -1-3k-4\alpha\right)  _{j};
\end{align*}
the $j$-sum is a polynomial of degree $2\alpha-3-j$ in $k$ (the factor
$\left(  \alpha+k+1\right)  $ obviously cancels when $j\leq\alpha-2$, and when
$j=\alpha-1$ the last factor equals $-3\left(  \alpha+k+1\right)  \left(
-1-3k-4\alpha\right)  _{\alpha-2}$). This phenomenon is qualitatively similar
to our results for the general $4\times4$ situation. In particular for
$\alpha=1$
\[
\Pr\left\{  \det\xi^{PT}\leq0\right\}  =\frac{2\Gamma\left(  2k+4\right)
^{2}}{\Gamma\left(  k+2\right)  \Gamma\left(  3k+6\right)  }=\frac
{2^{4k+7}\left(  \frac{1}{2}\right)  _{k+2}^{2}}{3^{3k+5}\left(  \frac{1}%
{3}\right)  _{k+2}\left(  \frac{2}{3}\right)  _{k+2}},
\]
and for $\alpha=2$%
\[
\Pr\left\{  \det\xi^{PT}\leq0\right\}  =\frac{2}{3}\frac{\left(  k+6\right)
\Gamma\left(  2k+6\right)  ^{2}}{\Gamma\left(  k+3\right)  \Gamma\left(
3k+9\right)  }.
\]

\subsection{Probability of $\det\xi^{PT}\geq\det\xi$}

Next we compute $\Pr\left\{  \det\xi^{PT}\geq\det\xi\right\}  $ for $\alpha=1$
and the measure $\left(  \det\xi\right)  ^{k}\mathrm{d}\nu_{\alpha}$. As above,
it is neater to work with the complement. From equation (\ref{detdet}),
$\det\xi^{PT}-\det\xi=\left(  \delta_{1}-\delta_{2}\right)  \left(
s_{5}\delta_{2}-s_{4}\delta_{1}\right)  $. To make $\det\xi^{PT}-\det\xi\leq0$
either $\delta_{1}\geq\delta_{2}$ and $s_{4}\geq\frac{\delta_{2}}{\delta_{1}%
}s_{5}$ or $\delta_{1}<\delta_{2}$ and $s_{5}>\frac{\delta_{1}}{\delta_{2}%
}s_{4}$. The first two integrations for $\delta_{1}\geq\delta_{2}$ are%
\begin{align*}
& \left(  \delta_{1}\delta_{2}\right)  ^{k+1}\int_{0}^{1}\left(
1-s_{5}\right)  ^{k}\mathrm{d}s_{5}\int_{\frac{\delta_{2}}{\delta_{1}}s_{5}%
}^{1}\left(  1-s_{4}\right)  ^{k}\mathrm{d}s_{4}\\
& =\frac{\left(  \delta_{1}\delta_{2}\right)  ^{k+1}}{k+1}\int_{0}^{1}\left(
1-s_{5}\right)  ^{k}\left(  1-\frac{\delta_{2}}{\delta_{1}}s_{5}\right)
^{k+1}\mathrm{d}s_{5}\\
& =\frac{\delta_{2}^{k+1}}{k+1}\int_{0}^{1}\left(  1-s_{5}\right)  ^{k}\left(
\delta_{1}-\delta_{2}+\delta_{2}\left(  1-s_{5}\right)  \right)
^{k+1}\mathrm{d}s_{5}\\
& =\frac{1}{k+1}\sum_{j=0}^{k+1}\binom{k+1}{j}\left(  \delta_{1}-\delta
_{2}\right)  ^{j}\delta_{2}^{2k+2-j}\int_{0}^{1}\left(  1-s_{5}\right)
^{2k+1-j}\mathrm{d}s_{5}\\
& =\frac{1}{k+1}\sum_{j=0}^{k+1}\binom{k+1}{j}\frac{\left(  \delta_{1}%
-\delta_{2}\right)  ^{j}\delta_{2}^{2k+2-j}}{2k+2-j};
\end{align*}
similarly for $\delta_{1}<\delta_{2}$ the value of the integral is $\frac
{1}{k+1}\sum_{j=0}^{k+1}\binom{k+1}{j}\frac{\left(  \delta_{2}-\delta
_{1}\right)  ^{j}\delta_{1}^{2k+2-j}}{2k+2-j}$. Proceeding as before and with
the same notations carry out the remaining integrations on the term with fixed
$j$, for $0\leq j\leq k+1$. Assume $0\leq s_{1}\leq\frac{1}{2}$, then we
obtain%
\begin{align*}
&  \int_{0}^{\frac{1}{2}}\mathrm{d}s_{1}\int_{0}^{b}\frac{\mathrm{d}\delta
_{2}}{\sqrt{b-\delta_{2}}}\left\{  \int_{0}^{\delta_{2}}\frac{\delta
_{1}^{2k+2-j}\left(  \delta_{2}-\delta_{1}\right)  ^{j}}{\sqrt{a-\delta_{1}}%
}\mathrm{d}\delta_{1}+\delta_{2}^{2k+2-j}\int_{\delta_{2}}^{a}\frac{\left(
\delta_{1}-\delta_{2}\right)  ^{j}}{\sqrt{a-\delta_{1}}}\mathrm{d}\delta
_{1}\right\}  \\
&  =\int_{0}^{\frac{1}{2}}\mathrm{d}s_{1}\int_{0}^{b}\frac{\delta_{1}%
^{2k+2-j}\mathrm{d}\delta_{1}}{\sqrt{a-\delta_{1}}}\int_{\delta_{1}}^{b}%
\frac{\left(  \delta_{2}-\delta_{1}\right)  ^{j}\mathrm{d}\delta_{2}}%
{\sqrt{b-\delta_{2}}}+\int_{0}^{b}\frac{\delta_{2}^{2k+2-j}\mathrm{d}%
\delta_{2}}{\sqrt{b-\delta_{2}}}\int_{\delta_{2}}^{a}\frac{\left(  \delta
_{1}-\delta_{2}\right)  ^{j}}{\sqrt{a-\delta_{1}}}\mathrm{d}\delta_{1}\\
&  =B\left(  j+1,\frac{1}{2}\right)  \int_{0}^{\frac{1}{2}}\mathrm{d}%
s_{1}\left\{  \int_{0}^{b}\left(  b-\delta_{1}\right)  ^{j+1/2}\frac
{\delta_{1}^{2k+2-j}\mathrm{d}\delta_{1}}{\sqrt{a-\delta_{1}}}+\int_{0}%
^{b}\left(  a-\delta_{2}\right)  ^{j+1/2}\frac{\delta_{2}^{2k+2-j}%
\mathrm{d}\delta_{2}}{\sqrt{b-\delta_{2}}}\right\}  \\
&  =\frac{j!}{\left(  \frac{1}{2}\right)  _{j+1}}\int_{0}^{\frac{1}{2}%
}\mathrm{d}s_{1}\int_{0}^{b}v^{2k+2-j}\left\{  \left(  b-v\right)  ^{j}%
\sqrt{\frac{b-v}{a-v}}+\left(  a-v\right)  ^{j}\sqrt{\frac{a-v}{b-v}}\right\}
\mathrm{d}v\\
&  =\frac{j!}{\left(  \frac{1}{2}\right)  _{j+1}}I\left(  2k+2-j,j\right)  .
\end{align*}
Thus%
\begin{align*}
\Pr\left\{  \det\xi^{PT}\leq\det\xi\right\}   &  =2c_{1,k}\sum_{j=0}%
^{k+1}\binom{k+1}{j}\frac{j!}{\left(  k+1\right)  \left(  2k+2-j\right)
\left(  \frac{1}{2}\right)  _{j+1}}I\left(  2k+2-j,j\right)  \\
&  =2^{2k-1}\frac{\left(  \frac{3}{2}\right)  _{k}^{2}\left(  \frac{5}%
{2}\right)  _{k}}{\left(  k+1\right)  !\left(  \frac{5}{2}\right)  _{2k+1}%
}\sum_{j=0}^{k+1}\frac{\left(  -k-1\right)  _{j}\left(  -4k-6\right)  _{j}%
}{\left(  -2k-2\right)  _{j}\left(  -2k-1\right)  _{j}},
\end{align*}
after simplication. The sum is a truncated $_{3}F_{2}$ series.

With more techical details, we can determine $\Pr\left\{  \det\xi^{PT}\leq
\det\xi\right\}  $ for the measure $\left(  \det\xi\right)  ^{k}\mathrm{d}%
\nu_{\alpha}$ for $\alpha,k=1,2,3,\ldots$. Proceeding as for $\alpha=1$ we
start with the integral%
\[
\left(  \delta_{1}\delta_{2}\right)  ^{\alpha+k}\int_{0}^{1}s_{5}^{\alpha
-1}\left(  1-s_{5}\right)  ^{k}\mathrm{d}s_{5}\int_{s_{5}\delta_{0}}^{1}%
s_{4}^{\alpha-1}\left(  1-s_{4}\right)  ^{k}\mathrm{d}s_{4},
\]
where $\delta_{0}=\frac{\delta_{2}}{\delta_{1}}$ and $0\leq\delta_{2}%
\leq\delta_{1}$; and there is a similar expression when $\delta_{1}\leq
\delta_{2}$. Use formula (\ref{antider}) (with $\delta_{0}$ replaced by
$s_{5}\delta_{0}$) to get the value
\begin{align*}
&  \delta_{2}^{k}\int_{0}^{1}s_{5}^{\alpha-1}\left(  1-s_{5}\right)  ^{k}%
\sum_{j=0}^{\alpha-1}\left(  -1\right)  ^{j}\frac{\left(  1-\alpha\right)
_{j}}{\left(  k+1\right)  _{j+1}}\delta_{2}^{\alpha-1-j}s_{5}^{\alpha
-1-j}\left(  \delta_{1}-\delta_{2}+\left(  1-s_{5}\right)  \delta_{2}\right)
^{k+j+1}\mathrm{d}s_{5}\\
&  =\sum_{j=0}^{\alpha-1}\left(  -1\right)  ^{j}\frac{\left(  1-\alpha\right)
_{j}}{\left(  k+1\right)  _{j+1}}\sum_{i=0}^{k+j+1}\binom{k+1+j}{i}B\left(
2\alpha-j-1,2k+j-i+2\right)  \left(  \delta_{1}-\delta_{2}\right)  ^{i}%
\delta_{2}^{2k+2\alpha-i}\\
&  =\sum_{i=0}^{k+\alpha}\left(  \delta_{1}-\delta_{2}\right)  ^{i}\delta
_{2}^{2k+2\alpha-i}\sigma\left(  k,\alpha,i\right)  ,
\end{align*}
where%
\[
\sigma\left(  k,\alpha,i\right)  =\sum_{j=\max\left(  0,i-k-1\right)
}^{\alpha-1}\frac{k!\left(  \alpha-1\right)  !\left(  2\alpha-2-j\right)
!\left(  2k+j-i+1\right)  !}{i!\left(  \alpha-1-j\right)  !\left(
k+1+j-i\right)  !\left(  2k+2a-i\right)  !}.
\]
For $0\leq i\leq k$%
\[
\sigma\left(  k,\alpha,i\right)  =\frac{k!\left(  2k+1-i\right)  !\left(
2\alpha-2\right)  !}{i!\left(  k+1-i\right)  !\left(  2k+2\alpha-1\right)
!}\sum_{j=0}^{\alpha-1}\frac{\left(  1-\alpha\right)  _{j}\left(
2k+2-i\right)  _{j}}{\left(  2-2\alpha\right)  _{j}\left(  k+2-i\right)  _{j}%
},
\]
and for $k+1\leq i\leq k+\alpha$%
\[
\sigma\left(  k,\alpha,i\right)  =\left\{  \alpha\binom{k+\alpha}{\alpha
}\right\}  ^{-2}\binom{k+\alpha}{i}.
\]

Similarly, the integral equals $\sum_{i=0}^{k+\alpha}\left(  \delta_{2}%
-\delta_{1}\right)  ^{i}\delta_{1}^{2k+2\alpha-i}\sigma\left(  k,\alpha
,i\right)  $ when $\delta_{1}\leq\delta_{2}$. With the same steps as above, we
obtain%
\[
\Pr\left\{  \det\xi^{PT}\leq\det\xi\right\}  =2c_{\alpha,k}\sum_{i=0}%
^{k+\alpha}\sigma\left(  k,\alpha,i\right)  \frac{i!}{\left(  \frac{1}%
{2}\right)  _{i+1}}I\left(  2k+2\alpha-i,i\right)  .
\]
Some simplification may be possible.

\section{Integral formulas\label{intform}}

In this section we compute closed expressions for
\[
I\left(  m,n\right)  :=\int_{0}^{\frac{1}{2}}ds\int_{0}^{b}v^{m}\left\{
\left(  a-v\right)  ^{n}\sqrt{\frac{a-v}{b-v}}+\left(  b-v\right)  ^{n}%
\sqrt{\frac{b-v}{a-v}}\right\}  \mathrm{d}v,
\]
for $m,n=0,1,2,3,\ldots$ and for $\left(  m,n\right)  =\left(  2\alpha
,0\right)  $ with $\alpha>0$; where $a=\left(  \frac{1-s}{2}\right)
^{2},b=\left(  \frac{s}{2}\right)  ^{2}$ so that $0\leq b\leq a\leq\frac{1}%
{4}$. The auxiliary formula
\begin{align*}
S\left(  m,n\right)   &  :=\sum_{i=0}^{m}\frac{\left(  -m\right)  _{i}\left(
n+1\right)  _{i}}{i!\left(  m+n+2\right)  _{i}}\sum_{j=0}^{n}\frac{\left(
-n\right)  _{j}\left(  \frac{1}{2}-n\right)  _{j}}{j!\left(  \frac{1}%
{2}\right)  _{j}}\frac{1}{i+j+\frac{1}{2}}\\
&  =2^{2m+2n}\frac{m!\left(  m+n\right)  !\left(  m+n+1\right)  !\left(
\frac{1}{2}\right)  _{n}}{n!\left(  n+2m+1\right)  !\left(  \frac{1}%
{2}\right)  _{m+n+1}}%
\end{align*}
is proved in \cite[Prop. 2]{DG}.

\begin{proposition}
\label{sumMN}For $m,n=0,1,2,3,\ldots$%
\begin{align*}
&  I\left(  m,n\right)  =\int_{0}^{\frac{1}{2}}ds\int_{0}^{b}v^{m}\left\{
\left(  a-v\right)  ^{n}\sqrt{\frac{a-v}{b-v}}+\left(  b-v\right)  ^{n}%
\sqrt{\frac{b-v}{a-v}}\right\}  \mathrm{d}v\\
&  =2^{-4m-4n-5}B\left(  m+1,n+2\right)  S\left(  m,n+1\right)  \\
&  =2^{-2m-2n-3}\frac{\left(  m!\right)  ^{2}\left(  m+n+1\right)  !\left(
\frac{1}{2}\right)  _{n+1}}{\left(  2m+n+2\right)  !\left(  \frac{1}%
{2}\right)  _{m+n+2}}.
\end{align*}

\end{proposition}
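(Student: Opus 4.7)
The plan is to reduce $I(m,n)$ to a constant multiple of $S(m,n+1)$ and then invoke the closed form of $S$ cited from \cite[Prop. 2]{DG}. I would first collapse the bracketed sum using the elementary identity
\[
(a-v)^n\sqrt{\tfrac{a-v}{b-v}}+(b-v)^n\sqrt{\tfrac{b-v}{a-v}}=\frac{(a-v)^{n+1}+(b-v)^{n+1}}{\sqrt{(a-v)(b-v)}},
\]
so that $I(m,n)$ becomes a single double integral with integrand $v^m[(a-v)^{n+1}+(b-v)^{n+1}]/\sqrt{(a-v)(b-v)}$.

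Next, I would expand $(a-v)^{n+1}$ and $(b-v)^{n+1}$ by the binomial theorem, producing a finite $(n+1)$-term sum in which each inner $v$-integral is $\int_0^b v^{m+j}/\sqrt{(a-v)(b-v)}\,dv$. I would evaluate this via $v=b\sin^2\phi$ and then $u=\sin^2\phi$ as a Beta integral representation of a Gauss hypergeometric ${}_2F_1(\tfrac12,m+j+1;m+j+\tfrac32;b/a)$. Expanding this hypergeometric as a power series in $b/a=(s/(1-s))^2$, and changing the outer variable to $k=s/(1-s)$ so that $s\in[0,1/2]$ becomes $k\in[0,1]$, turns every term into an elementary rational integral on $[0,1]$. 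Integrating against the generating function
\[
\tfrac12\bigl[(1+k)^{2n+2}+(1-k)^{2n+2}\bigr]={}_2F_1(-n-1,-n-\tfrac12;\tfrac12;k^2)
\]
produces the $1/(i+j+\tfrac12)$ factors characteristic of $S(m,n+1)$, while the binomial re-indexing supplies the terminating $(-m)_i$ factor. Collecting powers of $2$ and the remaining Beta factors, the result should work out to $2^{-4m-4n-5}B(m+1,n+2)\,S(m,n+1)$.

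Finally, I would substitute the closed form of $S(m,n+1)$ from \cite[Prop. 2]{DG} and apply the $\Gamma$-duplication formula $\Gamma(2u)=\pi^{-1/2}2^{2u-1}\Gamma(u)\Gamma(u+\tfrac12)$ repeatedly to reach the stated closed form. The hard part will be the middle step: tracking the Pochhammer and binomial factors carefully enough to recognise the emergent double sum as $S(m,n+1)$ exactly, with the correct denominators $(m+n+3)_i$ and terminating factors $(-m)_i$ and $(-n-1)_j$. A cleaner but formally equivalent route continues the substitutions (first $v=bu$, then $u=y^2$, then $(\alpha,\beta)=(s(1+y),s(1-y))$ and the symmetry $(\alpha,\beta)\mapsto(1-\alpha,1-\beta)$) to put $I(m,n)$ into the symmetric form
\[
\frac{1}{2^{4m+2n+4}}\int_0^1\!\int_0^1|\alpha-\beta|^{2m+1}\alpha^{n+\frac12}\beta^{n+\frac12}(1-\alpha)^{-\frac12}(1-\beta)^{-\frac12}\,d\alpha\,d\beta,
\]
which is Selberg's two-variable integral with parameters $(n+\tfrac32,\tfrac12,m+\tfrac12)$ and evaluates directly to the same closed form.
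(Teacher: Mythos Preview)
Your primary plan---binomial expansion of $(a-v)^{n+1}+(b-v)^{n+1}$, hypergeometric evaluation of the inner integral, then recognition of $S(m,n+1)$---has a gap at the step where you invoke the ``generating function'' $\tfrac12[(1+k)^{2n+2}+(1-k)^{2n+2}]$. That polynomial does not emerge from the substitutions you describe: after your expansion the $j$-th term carries the coefficient $a^{n+1-j}+b^{n+1-j}$, which in the variable $k=s/(1-s)$ becomes $(1+k^{2(n+1-j)})/(4(1+k)^2)^{n+1-j}$, not the stated even polynomial in $k$. (In the paper's proof the polynomial $(1-w)^{2n+2}+(1+w)^{2n+2}$ \emph{does} appear, but via a completely different pair of substitutions, $u^2=(b-v)/(a-v)$ and $w=(1-u)/(1+u)$, under which $u^{2n+2}+1$ produces it directly.) As written, therefore, this route does not reach $S(m,n+1)$.

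Your alternative Selberg route, however, is correct and genuinely different from the paper's argument. With $v=by^{2}$ and $(\alpha,\beta)=(s(1+y),s(1-y))$ one finds $b-v=\alpha\beta/4$, $a-v=(1-\alpha)(1-\beta)/4$, $v=(\alpha-\beta)^{2}/16$, and $ds\,dv=\tfrac18(\alpha-\beta)\,d\alpha\,d\beta$, sending $\{0\le s\le\tfrac12,\;0\le v\le b\}$ to $\{0\le\beta\le\alpha,\;\alpha+\beta\le1\}$. The involution $(\alpha,\beta)\mapsto(1-\beta,1-\alpha)$ extends the region to the full triangle and replaces the two-term bracket by $2(\alpha\beta)^{n+1}$; symmetrising in $\alpha\leftrightarrow\beta$ then gives exactly your display, namely $2^{-4m-2n-4}S_{2}(n+\tfrac32,\tfrac12,m+\tfrac12)$, and the classical Selberg formula yields the stated closed form after one use of the duplication formula. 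The paper instead substitutes $z=1-2s$, $u^{2}=(b-v)/(a-v)$, and then $y=z(1+u)/(1-u)$, to reach an integral over $[0,1]^{2}$ in $(u,y)$; a binomial expansion and the further change $w=(1-u)/(1+u)$ produce precisely the double sum $S(m,n+1)$, which is closed via the auxiliary identity of \cite{DG}. Your Selberg route trades that bespoke double-sum identity for the classical Selberg evaluation, which is a clean conceptual gain; the paper's route is more computational but keeps everything within elementary hypergeometric manipulations plus the cited lemma.
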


\begin{proof}
Set $z=1-2s,u^{2}=\frac{b-v}{a-v},a=\frac{1}{16}\left(  1+z\right)
^{2},b=\frac{1}{16}\left(  1-z\right)  ^{2}$ then%
\begin{align*}
b-v  &  =\frac{u^{2}z}{4\left(  1-u^{2}\right)  },a-v=\frac{z}{4\left(
1-u^{2}\right)  },\\
v  &  =\frac{\left(  1-z\right)  ^{2}-u^{2}\left(  1+z\right)  ^{2}}{16\left(
1-u^{2}\right)  },\mathrm{d}v=-\frac{uz}{2\left(  1-u^{2}\right)  ^{2}%
}\mathrm{d}u,\\
0  &  \leq z\leq1,0\leq u\leq\frac{1-z}{1+z}.
\end{align*}
For changing the order of integration note that $0\leq z\leq1$ and $0\leq
u\leq\frac{1-z}{1+z}$ is equivalent to $0\leq u\leq1$ and $0\leq z\leq
\frac{1-u}{1+u}$. Thus%
\begin{align*}
&  \left\{  \left(  a-v\right)  ^{n}\sqrt{\frac{a-v}{b-v}}+\left(  b-v\right)
^{n}\sqrt{\frac{b-v}{a-v}}\right\}  \mathrm{d}s\mathrm{d}v\\
&  =-\left\{  \left(  \frac{u^{2}z}{4\left(  1-u^{2}\right)  }\right)
^{n}u^{2}+\left(  \frac{z}{4\left(  1-u^{2}\right)  }\right)  ^{n}\right\}
\frac{z\mathrm{d}z\mathrm{d}u}{4\left(  1-u^{2}\right)  ^{2}}\\
&  =-2^{-2n-2}z^{n+1}\frac{u^{2n+2}+1}{\left(  1-u^{2}\right)  ^{n+2}%
}\mathrm{d}z\mathrm{d}u.
\end{align*}
Then introduce the new variable $y$:
\begin{align*}
z  &  =\frac{1-u}{1+u}y,\frac{z^{n+1}}{\left(  1-u^{2}\right)  ^{n+2}%
}\mathrm{d}z=\frac{y^{n+1}}{\left(  1+u\right)  ^{2n+4}}\mathrm{d}y,\\
v  &  =\frac{1}{16}\left(  1-y\right)  \left(  1-\left(  \frac{1-u}%
{1+u}\right)  ^{2}y\right)  ,\\
0  &  \leq u\leq1,0\leq y\leq1,
\end{align*}
and the integral is transformed to%
\begin{align}
&  2^{-4m-2n-2}\int_{0}^{1}\frac{u^{2n+2}+1}{\left(  1+u\right)  ^{2n+4}%
}\mathrm{d}u\int_{0}^{1}y^{n+1}\left(  1-y\right)  ^{m}\left(  1-\left(
\frac{1-u}{1+u}\right)  ^{2}y\right)  ^{m}\mathrm{d}y\label{int1}\\
&  =2^{-4m-2n-2}B\left(  m+1,n+2\right)  \sum_{i=0}^{m}\frac{\left(
-m\right)  _{i}\left(  n+2\right)  _{i}}{i!\left(  m+n+3\right)  _{i}}\int
_{0}^{1}\frac{u^{2n+2}+1}{\left(  1+u\right)  ^{2n+4}}\left(  \frac{1-u}%
{1+u}\right)  ^{2i}\mathrm{d}u,\nonumber
\end{align}
by use of the binomial expansion and%
\[
\int_{0}^{1}y^{n+1}\left(  1-y\right)  ^{m}y^{i}\mathrm{d}y=B\left(
m+1,n+2+i\right)  =B\left(  m+1,n+2\right)  \frac{\left(  n+2\right)  _{i}%
}{\left(  m+n+3\right)  _{i}}.
\]
Let $w=\dfrac{1-u}{1+u},u=\dfrac{1-w}{1+w}$, then $\mathrm{d}u=-\dfrac
{2}{\left(  1+w\right)  ^{2}}\mathrm{d}w$, $\dfrac{1}{1+u}=\frac{1}{2}\left(
1+w\right)  $, and%
\begin{align*}
\int_{0}^{1}\frac{u^{2n+2}+1}{\left(  1+u\right)  ^{2n+4}}\left(  \frac
{1-u}{1+u}\right)  ^{2i}\mathrm{d}u  &  =2^{-2n-3}\int_{0}^{1}\left\{  \left(
1-w\right)  ^{2n+2}+\left(  1+w\right)  ^{2n+2}\right\}  w^{2i}\mathrm{d}w\\
&  =2^{-2n-3}\int_{0}^{1}2\sum_{j=0}^{n+1}\binom{2n+2}{2j}w^{2j+2i}%
\mathrm{d}w\\
&  =2^{-2n-3}\sum_{j=0}^{n+1}\frac{\left(  -n-1\right)  _{j}\left(
-n+\frac{1}{2}\right)  _{j}}{j!\left(  \frac{1}{2}\right)  _{j}\left(
i+j+\frac{1}{2}\right)  }.
\end{align*}
Thus the integral equals%
\begin{align*}
&  2^{-4m-4n-5}B\left(  m+1,n+2\right)  \sum_{i=0}^{m}\frac{\left(  -m\right)
_{i}\left(  n+2\right)  _{i}}{i!\left(  m+n+3\right)  _{i}}\sum_{j=0}%
^{n+1}\frac{\left(  -n-1\right)  _{j}\left(  -n+\frac{1}{2}\right)  _{j}%
}{j!\left(  \frac{1}{2}\right)  _{j}\left(  i+j+\frac{1}{2}\right)  }\\
&  =2^{-4m-4n-5}B\left(  m+1,n+2\right)  S\left(  m,n+1\right)  .
\end{align*}
The proof is completed by simplifying $B\left(  m+1,n+2\right)  S\left(
m,n+1\right)  $.
\end{proof}

\begin{proposition}
\label{sum2a}For $\alpha>0$,%
\[
I\left(  2\alpha,0\right)  =\frac{\pi}{2^{6+8\alpha}}\frac{\Gamma
(2\alpha+1)^{2}}{\Gamma\left(  2\alpha+\frac{3}{2}\right)  \Gamma\left(
2\alpha+\frac{5}{2}\right)  }.
\]

\end{proposition}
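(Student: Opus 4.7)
The plan is to deduce the proposition for all $\alpha>0$ from the integer-exponent formula of Proposition~\ref{sumMN} via Carlson's theorem. Specializing Proposition~\ref{sumMN} to $n=0$ and simplifying the gamma ratios by the $\Gamma$-duplication formula yields
\[
I(m,0)=\frac{\pi}{2^{4m+6}}\,\frac{\Gamma(m+1)^{2}}{\Gamma(m+\tfrac{3}{2})\,\Gamma(m+\tfrac{5}{2})}
\]
whenever $m\in\{0,1,2,\ldots\}$, which is exactly the target right-hand side at $m=2\alpha$. What remains is to extend this equality from non-negative integers $m$ to all real $m>0$, and hence in particular to $m=2\alpha$ for arbitrary $\alpha>0$.

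Both sides of the identity are holomorphic in $m$ on the half-plane $\operatorname{Re}m>-1$: the integrand of $I(m,0)$ depends on $m$ only through $v^{m}$, and the weight $\sqrt{(a-v)/(b-v)}+\sqrt{(b-v)/(a-v)}$ carries only integrable singularities at $v=0,b$, while the proposed closed form has no singularities in this region. I would then verify the growth conditions of Carlson's theorem: on the imaginary axis $|v^{iy}|=1$ gives $|I(iy,0)|\le I(0,0)<\infty$, and Stirling together with $|\Gamma(\tfrac12+iy)|^{2}=\pi/\cosh\pi y$ yields a matching bound on the right-hand side; along the positive real axis Laplace-type estimates on $I(m,0)$ and Stirling on the gamma ratio show both sides decay like $2^{-4m}$ times an algebraic factor, so after multiplying the difference $F(m):=I(m,0)-\text{RHS}(m)$ by $2^{4m}$ one obtains a bounded holomorphic function on the right half-plane that vanishes at every non-negative integer. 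Carlson's theorem then forces $F\equiv 0$.

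The main obstacle is the bookkeeping of these growth estimates, particularly the Stirling asymptotic for the gamma ratio uniformly in the imaginary direction; this is routine but requires some care. A direct alternative would be to repeat the substitution chain of Proposition~\ref{sumMN}---$u^{2}=(b-v)/(a-v)$, $z=1-2s$, then $z=(1-u)y/(1+u)$, all valid without requiring integer exponents---to arrive at the analogue of equation~(\ref{int1}) with $n=0$ and $m=2\alpha$, and then use Euler's integral representation (in place of the terminating binomial expansion) to write the $y$-integral as $B(2,2\alpha+1)\,{}_{2}F_{1}(-2\alpha,\,2;\,2\alpha+3;\,((1-u)/(1+u))^{2})$. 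The remaining $u$-integral, after substituting $w=(1-u)/(1+u)$ and then $t=w^{2}$, collapses to a linear combination of two ${}_{3}F_{2}$ series at unit argument; the hard step on this route is identifying the appropriate classical summation formula (e.g.\ Watson's or Whipple's theorem, or a chain of contiguous relations) to evaluate them in closed form.
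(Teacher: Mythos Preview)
Your Carlson--theorem route is sound and genuinely different from the paper's argument. The paper does not analytically continue from Proposition~\ref{sumMN}; instead it repeats the substitution chain $u^{2}=(b-v)/(a-v)$, $z=1-2s$, $z=\frac{1-u}{1+u}y$ for general $\alpha>0$, expands $(1-((1-u)/(1+u))^{2}y)^{2\alpha}$ by the \emph{infinite} binomial series (justifying term-by-term integration from the $n^{-2\alpha-1}$ decay of the coefficients), integrates each term, and recognizes the result as a very-well-poised ${}_{5}F_{4}$ with $a=2$, $b=-2\alpha$, $c=\tfrac12$, $d=\tfrac32$, summed by the Rogers--Dougall formula. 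So your ``direct alternative'' is in fact much closer to what the paper does than your main proposal; the summation identity you were looking for is Rogers--Dougall, not Watson or Whipple. As for trade-offs: your Carlson argument is economical in that it recycles Proposition~\ref{sumMN} and avoids identifying a hypergeometric summation, at the cost of the growth bookkeeping you flag (the key points---$|2^{4m}v^{m}|=(16v)^{\operatorname{Re}m}\le 1$ since $v\le b\le\tfrac{1}{16}$, and the gamma ratio $\Gamma(m+1)^{2}/\Gamma(m+\tfrac32)\Gamma(m+\tfrac52)\sim m^{-2}$ uniformly for $|\arg m|\le\pi/2$---do go through). The paper's route is self-contained and independent of the auxiliary $S(m,n)$ formula from \cite{DG}, but requires spotting the right classical summation.
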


\begin{proof}
Follow the previous proof from the beginning to equation (\ref{int1}) which is
replaced by%
\begin{align*}
&  2^{-8\alpha-2}\int_{0}^{1}\frac{u^{2}+1}{\left(  1+u\right)  ^{4}%
}\mathrm{d}u\int_{0}^{1}y\left(  1-y\right)  ^{2\alpha}\left(  1-\left(
\frac{1-u}{1+u}\right)  ^{2}y\right)  ^{2\alpha}\mathrm{d}y\\
&  =2^{-8\alpha-2}\int_{0}^{1}\frac{u^{2}+1}{\left(  1+u\right)  ^{4}%
}\mathrm{d}u\sum_{i=0}^{\infty}\frac{\left(  -2\alpha\right)  _{i}}{i!}\left(
\frac{1-u}{1+u}\right)  ^{2i}B\left(  i+2,2\alpha+1\right) \\
&  =2^{-8\alpha-2}B\left(  2,2\alpha+1\right)  \sum_{i=0}^{\infty}%
\frac{\left(  -2\alpha\right)  _{i}\left(  2\right)  _{i}}{i!\left(
2\alpha+3\right)  _{i}}\int_{0}^{1}\frac{u^{2}+1}{\left(  1+u\right)  ^{4}%
}\left(  \frac{1-u}{1+u}\right)  ^{2i}\mathrm{d}u.
\end{align*}
The binomial series terminates when $2\alpha\in\mathbb{Z}_{+}$; otherwise from
the asymptotic property of the gamma function it follows that (with some fixed
$m\in\mathbb{Z}_{+}$ and $m-2\alpha>0)$
\begin{align*}
\frac{\left(  -2\alpha\right)  _{m+n}}{\left(  m+n\right)  !}  &
=\frac{\left(  -2\alpha\right)  _{m}}{(m+n)!}\left(  m-2\alpha\right)
_{n}=\frac{\left(  -2\alpha\right)  _{m}}{m!\Gamma\left(  m-2\alpha\right)
}\frac{\Gamma\left(  m-2\alpha+n\right)  }{\Gamma\left(  m+1+n\right)  }\\
&  \sim\frac{\left(  -2\alpha\right)  _{m}}{m!\Gamma\left(  m-2\alpha\right)
}n^{-2\alpha-1},\left(  n\rightarrow\infty\right)  ,
\end{align*}
so by the comparison test the series converges for $0\leq y\leq1$ and $0\leq
u\leq1$ provided $\alpha>0$, and can be integrated term-by-term. Also
$B\left(  2,2\alpha+1\right)  =\frac{1}{\left(  2\alpha+1\right)  \left(
2\alpha+2\right)  }$With the same argument as in the previous proof, with
$n=0$ we obtain%
\begin{align*}
\int_{0}^{1}\frac{u^{2}+1}{\left(  1+u\right)  ^{4}}\left(  \frac{1-u}%
{1+u}\right)  ^{2i}\mathrm{d}u  &  =\frac{1}{8}\left(  \frac{1}{i+\frac{1}{2}%
}+\frac{1}{i+\frac{3}{2}}\right)  =\frac{\left(  i+1\right)  }{4\left(
i+\frac{1}{2}\right)  \left(  i+\frac{3}{2}\right)  }\\
&  =\frac{\left(  i+1\right)  !\left(  \frac{1}{2}\right)  _{i}}{4i!\left(
\frac{1}{2}\right)  _{i+2}}=\frac{\left(  2\right)  _{i}\left(  \frac{1}%
{2}\right)  _{i}}{3\left(  1\right)  _{i}\left(  \frac{5}{2}\right)  _{i}}.
\end{align*}
Then%
\[
I\left(  2\alpha,0\right)  =\frac{2^{-8\alpha-2}}{3\left(  2\alpha+1\right)
\left(  2\alpha+2\right)  }\sum_{i=0}^{\infty}\frac{\left(  -2\alpha\right)
_{i}\left(  2\right)  _{i}\left(  2\right)  _{i}\left(  \frac{1}{2}\right)
_{i}}{i!\left(  2\alpha+3\right)  _{i}\left(  1\right)  _{i}\left(  \frac
{5}{2}\right)  _{i}}.
\]
This is a very-well-poised $_{4}F_{3}$ series which is summable using the
Rogers-Dougall formula (see \cite[16.4.9]{DLMF}):%
\begin{align*}
&  _{5}F_{4}\left(
\genfrac{}{}{0pt}{}{a,\frac{a}{2}+1,b,c,d}{\frac{a}{2},a-b+1,a-c+1,a-d+1}%
;1\right) \\
&  =\frac{\Gamma\left(  a-b+1\right)  \Gamma\left(  a-c+1\right)
\Gamma\left(  a-d+1\right)  \Gamma\left(  a-b-c-d+1\right)  }{\Gamma\left(
a+1\right)  \Gamma\left(  a-b-c+1\right)  \Gamma\left(  a-b-d+1\right)
\Gamma\left(  a-c-d+1\right)  },
\end{align*}
the sum converges if one of $b,c,d$ is a negative integer and it terminates or
$b+c+d-a<1$. The formula applies to our sum with $a=2,b=-2\alpha,c=\frac{1}%
{2},d=\frac{3}{2}$ (that is, $d=a-d+1$), and $b+c+d-a=-2\alpha<1$. The result
is
\begin{align*}
I\left(  2\alpha,0\right)   &  =\frac{2^{-8\alpha-2}}{3\left(  2\alpha
+1\right)  \left(  2\alpha+2\right)  }\frac{\Gamma\left(  2\alpha+3\right)
\Gamma\left(  \frac{3}{2}\right)  \Gamma\left(  \frac{5}{2}\right)
\Gamma\left(  2\alpha+1\right)  }{\Gamma\left(  3\right)  \Gamma\left(
2\alpha+\frac{3}{2}\right)  \Gamma\left(  2\alpha+\frac{5}{2}\right)
\Gamma\left(  1\right)  }\\
&  =\frac{2^{-8\alpha-6}\pi}{\left(  2\alpha+1\right)  \left(  2\alpha
+2\right)  }\frac{\Gamma(2\alpha+3)\Gamma(2\alpha+1)}{\Gamma\left(
2\alpha+\frac{3}{2}\right)  \Gamma\left(  2\alpha+\frac{5}{2}\right)  }\\
&  =\frac{\pi}{2^{6+8\alpha}}\frac{\Gamma(2\alpha+1)^{2}}{\Gamma\left(
2\alpha+\frac{3}{2}\right)  \Gamma\left(  2\alpha+\frac{5}{2}\right)  }.
\end{align*}

\end{proof}

Of course the two formulas agree when $2\alpha\in\mathbb{Z}_{\geq0}$.

\begin{acknowledgments}
We would like to thank  Marcelo A. Marchiolli for bringing his joint X-states paper \cite{Xstates2} to our attention, and suggesting that "those results can be useful for your research team in
the recent/future work".
\end{acknowledgments}


\begin{thebibliography}{18}  %

\bibitem{MomentBased} P. B. Slater and C. F. Dunkl, "Moment-based evidence for simple rational-valued Hilbert–Schmidt generic 2× 2 separability probabilities." Journal of Physics A: Mathematical and Theoretical 45.9 (2012): 095305.

\bibitem{concise} P.B. Slater, "A concise formula for generalized two-qubit Hilbert–Schmidt separability probabilities." Journal of Physics A: Mathematical and Theoretical 46.44 (2013): 445302.

\bibitem{Provost} S. B. Provost, "Moment-based density approximants." Mathematica Journal 9.4 (2005): 727-756.


\bibitem{peres} A. Peres, "Separability criterion for density matrices." Physical Review Letters 77.8 (1996): 1413.

\bibitem{horodecki} M, Horodecki, P. Horodecki, and R. Horodecki, "Separability of mixed states: necessary and sufficient conditions." Physics Letters A 223.1 (1996): 1-8.


\bibitem {FJ}J. Fei and R. Joynt, Numerical computations of separability
probabilities, arXiv:1409.1993, 6 Sep 2014.


\bibitem {MiSt}S. Milz and W. T. Strunz, Volumes of conditioned bipartite
state spaces, arXiv:1408.3666v2, 17 Sep 2014.

\bibitem{dubna} A. Khvedelidzea and I. Rogojina, "Computing Entangled Random States Probabilities for Composite 
$2 \times 2$ and $2 \times 3$ Systems," Laboratory of Information
Technologies, Joint Institute for Nuclear Research, Dubna, Russia.





\bibitem{ZHSL} K. {\.Z}yczkowski, P. Horodecki, A. Sanpera and M. Lewenstein, "Volume of the set of separable states." Physical Review A 58.2 (1998): 883.

\bibitem{ingemarkarol}I. Bengtsson and K. {\.Z}yczkowski, Geometry of quantum states: an introduction to quantum entanglement. Cambridge University Press, 2006.



\bibitem{szHS} K. {\.Z}yczkowski and H.-J. Sommers, "Hilbert–-Schmidt volume of the set of mixed quantum states." Journal of Physics A: Mathematical and General 36.39 (2003): 10115.



\bibitem{Induced}K. {\.Z}yczkowski and H.-J. Sommers, "Induced measures in the space of mixed quantum states." Journal of Physics A: Mathematical and General 34.35 (2001): 7111.



\bibitem{aubrun}G. Aubrun, S. J. Szarek, and D. Ye. "Entanglement thresholds for random induced states." Communications on Pure and Applied Mathematics 67.1 (2014): 129-171.


\bibitem{Xstates1}M. Ali, A. R. P. Rau, and Gernot Alber, "Quantum discord for two-qubit X states." Physical Review A 81.4 (2010): 042105.

\bibitem{Xstates2}P. Mendonça, M. A. Marchiolli, and D.  Galetti, "Entanglement universality of two-qubit X-states." Annals of Physics 351 (2014): 79-103.

\bibitem{formulas}P. B. Slater and C. F. Dunkl, Formulas for Rational-Valued Separability Probabilities of Random Induced Generalized Two-Qubit States, arXiv:1411.2561, 10 Nov 2014.

\bibitem{edelman}I. Dumitriu and A. Edelman, "Matrix models for beta ensembles." Journal of Mathematical Physics 47 (2006): 5830-5847.


\bibitem{wholehalf}P. B. Slater and C. F. Dunkl. "Whole and Half Generalized Two-Qubit Hilbert-Schmidt Separability Probabilities," arXiv preprint arXiv:1404.1860 (2014).






\bibitem {DLMF}F. Olver, D. Lozier, R. Boisvert, C. Clark, C., eds.,
\emph{NIST Handbook of mathematical functions}, Cambridge University Press, 2010.

\bibitem {DG}C. F. Dunkl and G. Gasper, The sums of a double hypergeometric
series and of the first $m+1$ terms of $_{3}F_{2}\left(  a,b,c;\left(
a+b+1\right)  /2,2c;1\right)  $ when $c=-m$ is a negative integer,
arXiv:1412.4022, 15 Dec 2014.

\end{thebibliography}
\end{document}